\newtheorem{algorithm}{Algorithm}
\newtheorem{definition}{Definition}
\newtheorem{lemma}{Lemma}
\newtheorem{theorem}{Theorem}
\newtheorem{corollary}{Corollary}
\title{Extending Lattice linearity for Self-Stabilizing Algorithms}
\author{Arya Tanmay Gupta\and Sandeep S Kulkarni}
\date{\texttt{\{atgupta, sandeep\}@msu.edu}\\Computer Science and Engineering, Michigan State University}
\begin{document}

\maketitle

\begin{abstract}
    In this article, we focus on extending the notion of lattice linearity to self-stabilizing programs. Lattice linearity allows a node to execute its actions with old information about the state of other nodes and still preserve correctness. It increases the concurrency of the program execution by eliminating the need for synchronization among its nodes. 
    
    The extension --denoted as eventually lattice linear algorithms-- is performed with an example of the service-demand based minimal dominating set (SDDS) problem, which is a generalization of the dominating set problem; it converges in $2n$ moves. Subsequently, we also show that the same approach could be used in various other problems including minimal vertex cover, maximal independent set and graph coloring. 
\end{abstract}

\textbf{\textit{Keywords:}} eventually lattice linear algorithms, self-stabilization, dominating set, vertex cover, graph coloring.

\section{Introduction}\label{section:introduction}

In a distributed program, a node cooperates with other nodes to solve the problem at hand such as leader election, mutual exclusion, tree construction, dominating set, independent set, etc. There are several models for such distributed programs. These can be broadly classified as message passing programs or shared-memory programs. In message passing programs, nodes do not share memory. Rather, they communicate with each other via messages. On the other hand, the shared-memory model allows a node to read the memory of other nodes to solve the given problem.

Implementation of such 
shared memory programs introduces several challenges to allow a node to read the state of its neighbors in a consistent fashion. One solution in this context is that the nodes only execute in a coordinated manner where when a node is activated by the scheduler, it reads the variables of other nodes and updates its own state. Furthermore, the scheduler needs to ensure that any \textit{conflicting} nodes are not activated at the same time. 
This approach, however, is expensive and requires synchronization among nodes. 

To alleviate the issue of consistency while reading remote variables, 
Garg (2020) \cite{Garg2020} introduced lattice linear predicate detection in combinatorial optimization problems.
In \cite{Garg2020}, it is shown that when an algorithm exploits lattice linearity of the underlying problem, it preserves correctness even if nodes execute with old information. However, this work assumes that the algorithm begins in a specific initial state and, hence, is not applicable for self-stabilizing algorithms since a self-stabilizing algorithm guarantees that starting from an arbitrary state, the algorithm reaches a legitimate state (\textit{invariant}) and remains there forever. With this intuition, in this work, we extend the results in \cite{Garg2020} to self-stabilizing algorithms.

We proceed as follows. We begin with the problem of service-demand based minimal dominating set (SDDS) which is a generalization of the dominating set problem. We devise a self-stabilizing algorithm for SDDS. We scrutinize this algorithm and disassemble it into two parts, one of which satisfies the lattice linearity property of \cite{Garg2020} if it begins in a \textit{feasible} state.  Furthermore, we show that the second part of the algorithm ensures that the algorithm reaches a \textit{feasible} state. We show that the resulting algorithm is self-stabilizing, and the algorithm 
has 
\textit{limited-interference} property (to be discussed in \Cref{subsection:ds-eventual}) due to which it is tolerant to the nodes reading old values of other nodes. 

We also demonstrate that this approach is generic. It applies to various other problems including vertex cover, independent set and graph coloring. 

\subsection{Contributions of the article}

\begin{itemize}
    \item We present a self-stabilizing algorithm for the minimal SDDS problem. The algorithm can be modified to solve other generalizations of dominating set present in the literature. The algorithm converges in $2n$ moves, which is an improvement over the other algorithms in the literature.

\item We extend the notion of \textit{lattice linear predicate detection} from \cite{Garg2020} to introduce the class of \textit{lattice linear self-stabilizing algorithms} and \textit{eventually lattice linear self-stabilizing algorithms}.
Such algorithms allow the program to converge even when the nodes read old values. This is unlike the algorithms presented in \cite{Garg2020} where it is required that (1) the problems have only one optimal state, and (2)  the program needs to start in specific initial states.

\item Our solution to SDDS can be extended to other problems including minimal vertex cover, maximal independent set and graph coloring problems.
The resulting algorithms are eventually lattice linear and can be modified to lattice linear self-stabilizing algorithms. 
\end{itemize}

\subsection{Organization of the article}

In \Cref{section:literature}, we discuss the related work in the literature. In \Cref{section:preliminaries}, we discuss some notations and definitions that we use in the article.
In \Cref{section:sdds-algorithm}, we describe the algorithm for the service-demand based dominating set problem. In \Cref{section:sdds}, we analyze the characteristics of the algorithm and show that it is eventually lattice linear. In \Cref{section:other-examples}, we use the structure of eventually lattice linear self-stabilizing algorithms to develop algorithms for vertex cover, independent set and graph coloring problems.
Finally, we conclude the article in \Cref{section:conclusion}.

\section{Literature study and discussion on our contribution}\label{section:literature}
Self-stabilizing algorithms for the minimal dominating set problem have been proposed in several works in the literature, for example, in \cite{Xu2003,Hedetniemi2003,Turau2007,GODDARD2008,Chiu2014}. The best convergence time among these works is $4n$ moves.

Other variations of the dominating set problem are also studied. Fink and Jacobson (1985) \cite{Fink1985} proposed the minimal $k$-dominating set problem; here, the task is to compute a minimal set of nodes $\mathcal{D}$ such that for each node $v\in V(G)$, $v\in\mathcal{D}$ or there are at least $k$ neighbors of $v$ in $\mathcal{D}$. When $k=1$, the definition of $\mathcal{D}$ here is same as that in the general dominating set problem. Kamei and Kakugawa (2003) \cite{Kamei2003} proposed self-stabilizing algorithm for tree networks under central and distributed schedulers for the minimal $k$-dominating set; the converge time is $n^2$ moves. Kamei and Kakugawa (2005) \cite{Kamei2005} have proposed a self-stabilizing distributed algorithm which converges in $2n+3$ rounds; their algorithm runs on synchronous daemon.

A generalization of the dominating set problem is described in Kobayashi et al. (2017) \cite{Kobayashi2017}. This article assumes the input to include wish sets (of nodes) for every node. For each node $i$, either $i$ should be in the dominating set $\mathcal{D}$ or at least one of its wish set must be a subset of $\mathcal{D}$. In this case, the input size may be exponential. 
The nodes require to read the latest values of other nodes.

Self-stabilizing algorithms for the vertex cover problem
has been studied in Kiniwa (2005) \cite{Kiniwa2005}, Astrand and Suomela (2010) \cite{Astrand2010}, and Turau (2010) \cite{TURAU2010}. A survey of self-stabilization algorithms on independence, domination and coloring problems can be found in Guellati and Kheddouci (2010) \cite{Guellati2010}.

Garg (2020) \cite{Garg2020} studied the exploitation of lattice linear predicates in several problems to develop parallel processing algorithms. Lattice linearity ensures convergence of the system to an optimal solution while the nodes perform executions parallely and are allowed to do so without coordination, and are allowed to perform executions based on the old values of other nodes. Garg (2021) \cite{Garg2021} introduces lattice linearity to dynamic programming problems such as longest increasing subsequences and knapsack problem. In this approach, the lattice arises from the computation. We are going to pursue a similar goal for self-stabilizing problems.

Our SDDS algorithm uses local checking to determine if it is in an inconsistent state and local correction to restore it. Thus, it differs from \cite{Varghese1992} where global correction in the form of reset is used. Local detection and correction is also proposed in \cite{Arora1993,Leal2004}. The key difference with our work is that we are focusing on scenarios where local detection can be performed without requiring coordination with other nodes. 

\section{Preliminaries}\label{section:preliminaries}

\subsection{Modeling Algorithms}
Throughout the article, we denote $G$ to be an arbitrary graph on which we apply our algorithms. $V(G)$ is the vertex-set and $E(G)$ is the edge-set of $G$. In $G$, for any node $i$, $Adj_i$ is the nodes connected to $i$ in $G$, and $N_i=Adj_i\cup\{i\}$.
$deg(i)$ denotes the degree of node $i$.

Each node $i$ is associated with a set of variables. The algorithm is written in terms of rules, where each \textit{rule} for process $i$ is of the form $guard\longrightarrow action$ where \textit{guard} is a proposition over variables of 
some nodes which may include the variables of $i$ itself along with the variables of other nodes.
If any of the guards hold true for some node, we say that the node is \textit{enabled}. 
As the algorithm proceeds, we define a \textit{move} with reference to a node to be an action in which it changes its state. A \textit{round} with reference to a scheduler is a minimum time-frame where each node 
is given a chance to evaluate its guards and take action (if some guard evaluates to true) at least once.

An algorithm is \textit{silent} if no node is enabled when $G$ reaches an optimal state (we describe the respective optimal states as we discuss the problems in this article).

\textbf{Scheduler/Daemon.} A \textit{central scheduler/daemon} is a scheduler which chooses only one node to evaluate its guards in a time-step and execute the corresponding action. A \textit{distributed scheduler/daemon} chooses an arbitrary subset of nodes of $V(G)$ in a time-step to evaluate their guards and execute the corresponding actions respectively. A \textit{synchronous scheduler/daemon} chooses all the nodes in $V(G)$ in each time-step to evaluate their guards and execute the corresponding actions respectively together.

\textbf{Read/Write Model.}
In the read/write model, we partition the variables of a node as \textit{public} variables that can be read by others, and \textit{private} variables that are only local to that node. 
In this model, the rules of the node are allowed to be either:
\begin{enumerate}[label=(\alph*)]
    \item \textit{read rules}, where any node $i$ is allowed to read the public variables of one or more or all of its neighbors and copies them into a private variable of $i$, or 
    \item \textit{write rules}, where $i$ reads only its own variables to update its public variables.
\end{enumerate}

\subsection{Lattice Linear Predicates}

\textit{Lattice Linearity} \cite{Garg2020} of a problem is a phenomenon by which all the state vectors of a global state of a system $G$ form a distributive lattice. The predicate which defines an optimal state of the problem (under which such a lattice forms) is called a \textit{lattice linear predicate}. In such a lattice, if the state of a system $G$ is false according to the predicate, then at least one node $i\in V(G)$ can be identified such that it is \textit{forbidden}, that is, in order for $G$ to reach an optimal state, $i$ must change its state. Since this article studies self-stabilization problems, we define the predicate to be an optimal state with respect to the respective problems.

\subsection{The communication model}

The nodes of a graph communicate via shared memory. In each action, a node reads the values of its distance-k neighbors (where value of $k$ depends upon the specific algorithm) and updates its own state. We make no assumptions about atomicity with respect to reading the variables. In other words, while one node is in the middle of updating its state, its neighbors may be updating their owns state as well. In turn, this means that when node $i$ changes its state (based on state of node $j$) it is possible that the state of node $j$ has changed. In other words, $i$ is taking an action based on an old value of node $j$.  
Therefore, our algorithms will run equally well in a message passing model (with distance-$k$ flooding), without the requirement of synchronization or locks.

\section{Service-Demand based Dominating Set}\label{section:sdds-algorithm}

In this section, 
we introduce a generalization of the dominating set problem,
the service-demand based dominating set problem and describe an algorithm to solve it. 

\begin{definition}\textbf{Service-demand based dominating set problem (SDDS).}
    In the minimal \textit{service-demand based dominating set} problem, the input is a graph $G$ and a set of services $S_i$ and a set of demands $D_i$ for each node $i$ in $G$; the task is to compute a minimal set $\mathcal{D}$ such that for each node $i$,
    \begin{enumerate}
        \item either $i\in \mathcal{D}$, or
        \item for each demand $d$ in $D_i$, there exists at least one node $j$ in $Adj_i$ such that $d\in S_j$ and $j\in \mathcal{D}$.
    \end{enumerate}
\end{definition}
In the following subsection, we present a self-stabilizing algorithm for the minimal SDDS problem.
Each node $i$ is associated with variable $st.i$ with domain $\{IN, OUT\}$. $st.i$ defines the state of $i$. We define $\mathcal{D}$ to be the set $\{i\in V(G): st.i=IN\}$. 

\subsection{Algorithm for SDDS problem}\label{subsection:ds-general-algorithm}

The list of constants stored in each node is described in the following table. For a node $i$, $D_i$ is the set of demands of $i$, $S_i$ is the set of services that $i$ can provide to its neighbors. $D_i$ and $S_i$ are provided as part of the input.

\begin{center}
    \begin{tabular}{|l|l|}
        \hline
        Constant & What it stands for\\
        \hline
        $D_i$ & the set of demands of node $i$.\\
        $S_i$ & the set of services provided by node $i$.\\
        \hline
    \end{tabular}
\end{center}

The list of macros stored in each node is described in the following table.
Recall that $\mathcal{D}$ is the set of nodes which currently have the state as $IN$. \textsc{Satisfied}($i$) is true if $i\in\mathcal{D}$ or each demand $d$ in $D_i$ is being served by some node $j$ in $Adj_i$.
If \textsc{Removable}($i$) is true, then $\mathcal{D}\setminus\{i\}$ is also a dominating set given that $\mathcal{D}$ is a dominating set.
\textsc{Dominators-Of}($i$) is the set of nodes that are (possibly) dominating node $i$: if some node $j$ is in \textsc{Dominators-Of}($i$), then there is at least one demand $d\in D_i$ such that $d\in S_i$.
We also defined \textsc{Forbidden}($i$) to capture the notion of \textit{forbidden} in \cite{Garg2020} (discussed in \Cref{section:preliminaries}).

\begin{center}
    \begin{tabular}{|l|l|}
        \hline
        Macro & What it stands for\\
        \hline
        $\mathcal{D}$ & $\{i\in V(G): st.i =$ $IN\}$.\\
        \textsc{Satisfied}($i$) & $st.i=IN\lor(\forall d\in D_i, \exists j\in Adj_i: d\in S_j\land st.j=IN)$.\\
        \textsc{Unsatisfied-DS}$(i)$ & $\lnot$\textsc{Satisfied}($i$).\\
        \textsc{Removable-DS}$ (i$) & $(\forall d \in D_i : (\exists j \in Adj_i: d \in S_j \land st.j=$ $IN))\land$\\
         & \quad $(\forall j \in Adj_i,\forall~d \in D_j:d\in S_i\implies$\\
         & \quad \quad $(\exists k \in Adj_j, k\neq i:(d \in S_k \land st.k =$ $IN)))$.\\
        \textsc{Dominators-Of}($i$) & $\{j\in Adj_i, st.j=IN:\exists d\in D_i:d\in S_j\}\cup\{i\}$\quad if $st.i=IN$\\
         & $\{j\in Adj_i, st.j=IN:\exists d\in D_i:d\in S_j\}$\quad \quad \quad \quad otherwise.\\
        \textsc{Forbidden-DS}$(i)$ & $st.i=IN\land$ \textsc{Removable-DS}$(i)\land$\\
         & \quad $(\forall j \in Adj_i,\forall~d \in D_j:d\in S_i\implies$\\
         & \quad \quad $((\forall k \in$ \textsc{Dominators-Of}$(j)$, $k\neq i:(d \in S_k \land st.k =$ $IN))\implies$\\
         & \quad \quad \quad $(id.k<id.i\lor\lnot$\textsc{Removable-DS}$(k))))$.\\
        \hline
    \end{tabular}
\end{center}

The general idea our algorithm is as follows. 
\begin{enumerate}
    \item We allow a node to enter the dominating set unconditionally if it is unsatisfied, i.e., \textsc{Satisfied}($i$) is false. This ensures that $G$ enters a feasible state (where $\mathcal{D}$ is a dominating set) as quickly as possible.
    \item While entering the dominating set is not coordinated with others, leaving the dominating set is coordinated with neighboring nodes. 
    Node $i$ can leave the dominating set only if it is removable. But before it does that, it needs to coordinate with others so that too many nodes do not leave, creating a race condition. 
    Specifically, if $i$ serves for a demand $d$ in $D_j$ where $j \in Adj_i$ and the same demand is also served by another node $k$ ($k\in Adj_j$) then $i$ leaves only if (1) $id.k < id.i$ or (2) $k$ is not removable. 
    This ensures that if some demand $d$ of $D_j$ is satisfied by both $i$ and $k$ both of them cannot leave the dominating set simultaneously. 
    This ensures that $j$ will remain dominated. 
\end{enumerate} 
Thus, the rules for \Cref{algorithm:rules-ds} are as follows: 

\begin{algorithm}\label{algorithm:rules-ds}Rules for node $i$.
    \begin{center}
        \begin{tabular}{|l|}
            \hline
            \textsc{Forbidden-DS}$(i)\longrightarrow st.i=OUT$.\\
            \textsc{Unsatisfied-DS}$(i)\longrightarrow st.i=IN$.\\
            \hline
        \end{tabular}
    \end{center}
\end{algorithm}

We decompose  \Cref{algorithm:rules-ds} into two parts: (1) \Cref{algorithm:rules-ds}.1, that only consists of first guard and action of \Cref{algorithm:rules-ds} and (2) \Cref{algorithm:rules-ds}.2, that only consists of the second guard and action of \Cref{algorithm:rules-ds}. We use this decomposition in some of the following parts of this article section to relate the algorithm to eventual lattice linearity.

\section{Lattice Linear Characteristics of \Cref{algorithm:rules-ds}}\label{section:sdds}

In this section, we analyze the characteristics of \Cref{algorithm:rules-ds} to demonstrate that it is eventually lattice linear. We proceed as follows.
In \Cref{subsection:ds-propositions}, we state the propositions which define the feasible and optimal states of the SDDS problem, along with some other definitions. In \Cref{subsection:guarantee-feasible}, we show that $G$ reaches a state where it manifests a (possibly non-minimal) dominating set.
In \Cref{subsection:ds-action2}, we show that after when $G$ reaches a feasible state, \Cref{algorithm:rules-ds} behaves like a lattice linear algorithm. 
In \Cref{subsection:termination}, we show that when $\mathcal{D}$ is a minimal dominating set, no nodes are enabled. 
In \Cref{subsection:ds-eventual}, we argue that because there is a bound on interference between Algorithm 1.1 and 1.2 even when the nodes read old values, \Cref{algorithm:rules-ds} is an Eventually Lattice Linear Self-Stabilizing (ELLSS) algorithm.
In \Cref{subsection:time-space-complexity-analysis}, we study the time and space complexity attributes of \Cref{algorithm:rules-ds}.

\subsection{Propositions stipulated by the SDDS problem}\label{subsection:ds-propositions}

Notice that the SDDS problem stipulates that the nodes whose state is $IN$ must collectively form a dominating set. Formally, we represent this proposition as $\mathcal{P}_d^\prime$ which is defined as follows.
\begin{center}
    $\mathcal{P}_d^\prime(\mathcal{D}) \equiv \forall i\in V(G):(i\in \mathcal{D}\lor (\forall d\in D_i,\exists j\in Adj_i: (d\in S_j\land j\in \mathcal{D})))$.
\end{center}
The SDDS problem stipulates an additional condition that $\mathcal{D}$ should be a minimal dominating set. We formally describe this proposition $\mathcal{P}_d$ as follows.

\begin{center}
    $\mathcal{P}_d(\mathcal{D})\equiv \mathcal{P}^\prime_d(\mathcal{D})\land(\forall i\in \mathcal{D}, \lnot\mathcal{P}_d^\prime(\mathcal{D}\setminus\{i\}))$.
\end{center}

If $\mathcal{P}_d^\prime(\mathcal{D})$ is true, then $G$ is in a feasible state. And, if $\mathcal{P}_d(\mathcal{D})$ is true, then $G$ is in an optimal state. 

Based on the above definitions , we define two scores with respect to the global state, $RANK$ and $BADNESS$.
$RANK$ determines the number of nodes needed to be added to $\mathcal{D}$ to change $\mathcal{D}$ to a dominating set. 
$BADNESS$ determines the number of nodes that are needed to be removed from $\mathcal{D}$ to make it a minimal dominating set, given that $\mathcal{D}$ is a (possibly non-minimal) dominating set. Formally, we define $RANK$ and $BADNESS$ as follows.

\begin{definition}
    $RANK(\mathcal{D})\equiv\min\{|\mathcal{D}^\prime|-|\mathcal{D}|:\mathcal{P}^\prime_d(\mathcal{D}^\prime)\land \mathcal{D}\subseteq \mathcal{D}^\prime\}$.
\end{definition}

\begin{definition}
    $BADNESS(\mathcal{D})\equiv\max\{|\mathcal{D}|-|\mathcal{D}^\prime|:\mathcal{P}^\prime_d(\mathcal{D}^\prime)\land \mathcal{D}^\prime\subseteq \mathcal{D}\}$.
\end{definition}

\subsection{Guarantee to Reach a Feasible State by \Cref{algorithm:rules-ds}.2}\label{subsection:guarantee-feasible}

In this subsection, we show that if the nodes execute \Cref{algorithm:rules-ds}.2 only, then $G$ is guaranteed to reach a \textit{feasible} state where $\mathcal{D}$ is a (possibly non-minimal) dominating set. 

\begin{lemma}\label{lemma:d-not-ds}
    Let $t.\mathcal{D}$ be the value of $\mathcal{D}$ at the beginning of round $t$. 
If $t.\mathcal{D}$ is not a dominating set then $(t+1).\mathcal{D}$ is a dominating set.
\end{lemma}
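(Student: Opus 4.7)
The plan is to exploit the monotonicity of Algorithm~\ref{algorithm:rules-ds-lattice-linear}.2 together with the definition of a round. The key observation is that the only rule in Algorithm~\ref{algorithm:rules-ds-lattice-linear}.2 is $\textsc{Unsatisfied-DS}(i)\longrightarrow st.i=IN$, so states can only transition from $OUT$ to $IN$; once a node is in $\mathcal{D}$ it remains there throughout the round. In particular, the set of $IN$-neighbors of any node is monotonically nondecreasing as the round progresses. Recall as well that by the definition of a round, every node is given at least one chance to evaluate its guards.

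First, I would fix any round $t$ in which $t.\mathcal{D}$ is not a dominating set and argue about what must hold at the beginning of round $t+1$. Pick an arbitrary node $i$ and consider the two cases. If $st.i=IN$ at the beginning of round $t+1$, then $i$ is trivially dominated by itself. Otherwise $st.i=OUT$ at the end of round $t$, and by monotonicity $i$ was also in state $OUT$ at the moment it evaluated its guard during round $t$. At that evaluation moment, had $\textsc{Unsatisfied-DS}(i)$ been true, $i$ would have transitioned to $IN$, contradicting $st.i=OUT$ at round's end. Hence $\textsc{Satisfied}(i)$ held at the evaluation moment, meaning each demand $d\in D_i$ was being served by some $IN$-neighbor $j$ with $d\in S_j$.

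Next, I would close the argument by invoking monotonicity a second time: since every $IN$-neighbor of $i$ that served a demand of $i$ at $i$'s evaluation moment is still $IN$ at the end of round $t$ (no node leaves $\mathcal{D}$ under Algorithm~\ref{algorithm:rules-ds-lattice-linear}.2), the servers of $i$'s demands remain in place. Thus $\textsc{Satisfied}(i)$ continues to hold at the beginning of round $t+1$, so $i$ is dominated. Since $i$ was arbitrary, $(t+1).\mathcal{D}$ satisfies $\mathcal{P}_d^\prime$, i.e., it is a dominating set.

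The only subtlety I expect is the setting where nodes read stale values of neighbors, since the paper emphasizes that framework. Here, however, monotonicity dissolves the issue: a stale read might \emph{underestimate} the current $\mathcal{D}$ but never overestimate it, so the worst that can happen is that $i$ believes it is unsatisfied and joins $\mathcal{D}$ anyway; this only accelerates reaching a feasible state and cannot undo the domination established by earlier $IN$-transitions. Consequently the argument goes through verbatim whether the scheduler is central, distributed, or synchronous, and whether reads are fresh or outdated.
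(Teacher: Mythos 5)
Your argument is correct for the setting the subsection advertises---nodes executing only \Cref{algorithm:rules-ds-lattice-linear}.2---and within that scope it is tighter than the paper's own proof: the monotonicity observation (under the rule $\textsc{Unsatisfied-DS}(i)\longrightarrow st.i=IN$ alone, $\mathcal{D}$ only grows during the round) cleanly yields both that an $OUT$ node which ends the round $OUT$ must have been satisfied at its evaluation moment, and that its servers persist to the start of round $t+1$. Your treatment of stale reads is also sound and more explicit than the paper's: under \Cref{algorithm:rules-ds-lattice-linear}.2 a stale view can only underestimate $\mathcal{D}$, so a node may join unnecessarily but can never wrongly conclude it is satisfied.

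The caveat is one of scope. The paper's own proof does not restrict to \Cref{algorithm:rules-ds-lattice-linear}.2: its first paragraph explicitly considers a node $i$ with $i\in t.\mathcal{D}$ and $i\not\in(t+1).\mathcal{D}$, i.e., a node that leaves $\mathcal{D}$ during round $t$ via the first action, and argues that such a departure cannot increase $RANK$ because $\textsc{Removable-DS}(i)$ guarantees that $i$ and every node it serves remain dominated. This case matters because the lemma is later invoked in the proof of \Cref{theorem:ds-convergence-time} for the full algorithm (``by the end of the first round, $\mathcal{D}$ becomes a dominating set''), where both actions run concurrently. Under the full algorithm your monotonicity premise fails---nodes can leave $\mathcal{D}$ mid-round---so your proof as written does not cover that use. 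To close the gap you would add the paper's observation: any node that executes the first action does so only when $\textsc{Forbidden-DS}$ holds, whose $\textsc{Removable-DS}$ conjunct ensures every demand it was serving (including its own) has another $IN$ server, so departures never destroy domination established during the round.
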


\begin{proof}
    Let $i$ be a node such that $i\in t.\mathcal{D}$ and $i\not\in(t+1).\mathcal{D}$, {i.e., $i$ leaves the dominating set in round $t$}. This means that $i$ will remain satisfied and each node in $Adj_i$ is satisfied, even when $i$ is removed. This implies that $i$ will not reduce the feasibility of $t.D$; it will not increase the value of $RANK$.
    
    Now let $\ell$ be a node such that $\ell\not \in t.\mathcal{D}$ which is not satisfied when it evaluates its guards in round $t$. This implies that $\exists~d\in D_\ell$ such that $d$ is not present in $S_j$ for any $j\in Adj_\ell$. According to the algorithm, the guard of the second action is true for $\ell$. This implies that $st.\ell$ will be set to $IN$. 
    
    It can also be possible for the node $\ell$ that it is satisfied when it evaluates its guards in round $t$. This may happen if some other nodes around $\ell$ already decided to move to $\mathcal{D}$, and as a result $\ell$ is now satisfied. Hence $\ell\not\in(t+1).\mathcal{D}$ and we have that $\ell$ is dominated at round $t+1$.
    
    Therefore, we have that $(t+1).\mathcal{D}$ is a dominating set, which may or may not be minimal.
    \qed
\end{proof}

By \Cref{lemma:d-not-ds}, we have that if the $G$ is in a state where $RANK >0$ then by the next round, $RANK$ will be $0$.

\subsection{Lattice Linearity of \Cref{algorithm:rules-ds}.1}\label{subsection:ds-action2}

In the following lemma, we show that \Cref{algorithm:rules-ds}.1 is lattice linear.

\begin{lemma}\label{lemma:ds}
    If $t.\mathcal{D}$ is a non-minimal dominating set then according to \Cref{algorithm:rules-ds} (more specifically, \Cref{algorithm:rules-ds}.1),
    there exists at least one node such that $G$ cannot reach a minimal dominating set until that node is removed from the dominating set.
\end{lemma}

\begin{proof}
Since $\mathcal{D}$ is a dominating set, we have that the second guard is not true for any node in $G$.

Since $\mathcal{D}$ is not minimal, there exists at least one node that must be removed in order to make $\mathcal{D}$ minimal. Let $S^\prime$ be the set of nodes which are removable. 
Let $M$ be some node in $S^\prime$. If $M$ is not serving any node, then \textsc{Forbidden}($M$) is trivially true. Otherwise there exists at least one node $j$ which is served by $M$, that is, $\exists d\in D_j:d\in S_M$. We study two cases which are as follows:
(1) for some
node $j$ served by $M$, there does not exist a node $b \in S^\prime$ which serves $j$, and (2) for any node $b \in S^\prime$ such that $M$ and $b$ serve some common node $j$, $id.b<id.M$. 

In the first case, $M$ cannot be removed because \textsc{Removable}($M$) is false and, hence, $M$ cannot be in $S^\prime$, thereby leading to a contradiction.
In the second case, \textsc{Forbidden}($M$) is true and \textsc{Forbidden}($b$) is false since $id.b < id.M$. Thus, node $b$ cannot leave the dominating set in that $b$ cannot leave until $M$ leaves. In both the cases, we have that $j$ stays dominated.

Since ID of every node is distinct, we have that there exists at least one node $M$ for which \textsc{Forbidden}($M$) is true. For example, \textsc{Forbidden}($M$) is true for the node with the highest ID in $S^\prime$; $G$ cannot reach a minimal dominating set until $M$ is removed from the dominating set.
\qed
\end{proof}

From \Cref{lemma:ds}, it follows that \Cref{algorithm:rules-ds}.1 satisfies the condition of lattice linearity defined in \Cref{section:preliminaries}. It follows that if we start from a state where $\mathcal{D}$ is a (possibly non-minimal) dominating set and execute \Cref{algorithm:rules-ds}.1 then it will reach  a state where $\mathcal{D}$ is a minimal dominating set even if nodes are executing with old information about others. Next, we have the following result which follows from \Cref{lemma:ds}. 

\begin{lemma}
    Let $t.\mathcal{D}$ be the value of $\mathcal{D}$ at the beginning of round $t$. 
    If $t.\mathcal{D}$ is a non-minimal dominating set then $|(t+1).\mathcal{D}| \leq |t.\mathcal{D}|-1$, and $(t+1).\mathcal{D}$ is a dominating set.
\end{lemma}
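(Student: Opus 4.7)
The plan is to combine \Cref{lemma:ds-lattice-linear} with a careful tie-breaking analysis of \textsc{Forbidden-DS} in order to establish both conclusions simultaneously. First I would observe that since $t.\mathcal{D}$ is already a dominating set, \textsc{Satisfied}$(i)$ holds at every node, so the second guard \textsc{Unsatisfied-DS}$(i)$ is false everywhere in round $t$. Thus no node can enter $\mathcal{D}$ during round $t$, and we have $(t+1).\mathcal{D} \subseteq t.\mathcal{D}$; consequently only the first guard of \Cref{algorithm:rules-ds-lattice-linear} can contribute to the change between rounds $t$ and $t+1$.

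Next I would apply \Cref{lemma:ds-lattice-linear} directly: since $t.\mathcal{D}$ is a non-minimal dominating set, there exists at least one node $M$ with \textsc{Forbidden-DS}$(M)$ true at the beginning of round $t$ (e.g.\ the highest-ID node in the set $S'$ of removable nodes). By the definition of a round, $M$ evaluates its guards and, because \textsc{Forbidden-DS}$(M)$ holds, it executes $st.M \gets OUT$. Hence $M \in t.\mathcal{D} \setminus (t+1).\mathcal{D}$, which gives the cardinality bound $|(t+1).\mathcal{D}| \leq |t.\mathcal{D}|-1$.

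The core work is the second claim, that $(t+1).\mathcal{D}$ remains a dominating set. Let $L = t.\mathcal{D} \setminus (t+1).\mathcal{D}$ be the set of nodes that leave in round $t$. By the algorithm, every node in $L$ satisfied \textsc{Forbidden-DS}, which in turn implies \textsc{Removable-DS}. Fix any node $j$ and any demand $d \in D_j$; since $t.\mathcal{D}$ dominates $j$, there is some $M \in$ \textsc{Dominators-Of}$(j)$ with $d \in S_M$. If $M \notin L$ then $j$ remains served for demand $d$ at round $t+1$ and we are done. If $M \in L$, then \textsc{Removable-DS}$(M)$ guarantees the existence of another node $k \neq M$ in $Adj_j$ with $d \in S_k$ and $st.k = IN$ at round $t$. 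I would then show that $k \notin L$: if $k$ were also in $L$, then \textsc{Forbidden-DS}$(M)$ applied at the pair $(j,d)$ with witness $k$ would force $id.k < id.M \lor \lnot \textsc{Removable-DS}(k)$, and symmetrically \textsc{Forbidden-DS}$(k)$ applied with witness $M$ would force $id.M < id.k \lor \lnot \textsc{Removable-DS}(M)$. Since both $M$ and $k$ are in $L$, both are removable, so both disjuncts collapse to the ID comparisons, yielding the contradiction $id.k < id.M < id.k$.

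Therefore at least one server of each $(j,d)$ survives into round $t+1$, every node $j \notin (t+1).\mathcal{D}$ is still dominated for every demand in $D_j$, and $(t+1).\mathcal{D}$ is a dominating set. The main obstacle I anticipate is the ID-based tie-breaking step in the last paragraph: it must be carefully set up so that the two instances of \textsc{Forbidden-DS} use each other as the ``other dominator,'' which relies on $k \in$ \textsc{Dominators-Of}$(j) \setminus \{M\}$ and on \textsc{Removable-DS} being preserved across the simultaneous moves (so that the guard evaluations at the start of round $t$ remain consistent). The rest is a direct application of \Cref{lemma:ds-lattice-linear} and the definitions of the macros.
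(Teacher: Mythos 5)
Your proposal is correct and follows essentially the same route as the paper: no node enters because \textsc{Satisfied} holds everywhere, \Cref{lemma:ds-lattice-linear} supplies at least one departing node, and the ID-based tie-breaking inside \textsc{Forbidden-DS} prevents two servers of the same demand from leaving simultaneously. Your pairwise contradiction ($id.k<id.M$ and $id.M<id.k$) is just a more explicit rendering of the paper's one-line assertion that each served node retains a lower-ID server, so the two arguments coincide in substance.
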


\begin{proof}
From \Cref{lemma:ds}, {at least one node $M$ (including the maximum ID node in $S^\prime$} from the proof of \Cref{lemma:ds}) would be removed  in round $t$. Furthermore, since $\mathcal{D}$ is a dominating set, \textsc{Unsatisfied}($i$) is false at every node $i$. Thus, no node is added to $\mathcal{D}$ in round $t$. Thus, the $|(t+1).\mathcal{D}|\leq |t.\mathcal{D}|-1$. 
    
    For any node $M$ that is removable, \textsc{Forbidden}($i$) is true only if any node $j$ which is (possibly) served by $M$ has other neighbors (of a lower ID) which serve the demands which $M$ is serving to it. This guarantees that $j$ stays dominated and hence $(t+1).\mathcal{D}$ is a dominating set.
    \qed
\end{proof}

\subsection{Termination of \Cref{algorithm:rules-ds}}\label{subsection:termination}

The following lemma studies the action of \Cref{algorithm:rules-ds} when $\mathcal{D}$ is a minimal dominating set.

\begin{lemma}\label{lemma:ds-minimal}
Let $t.\mathcal{D}$ be the value of $\mathcal{D}$ at the beginning of round $t$. If $\mathcal{D}$ is a minimal dominating set, then $(t+1).\mathcal{D}=t.\mathcal{D}$.

\end{lemma}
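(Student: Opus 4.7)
The plan is to show that when $\mathcal{D}$ is a minimal dominating set, neither guard of \Cref{algorithm:rules-ds-lattice-linear} is enabled at any node, so no node performs a move in round $t$ and hence $(t+1).\mathcal{D} = t.\mathcal{D}$. I will handle the two guards separately.

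First I would dispatch the second guard, \textsc{Unsatisfied-DS}$(i)$. Since $t.\mathcal{D}$ is (in particular) a dominating set, $\mathcal{P}_d^\prime(t.\mathcal{D})$ holds; unfolding this predicate gives that every node $i$ either has $st.i = IN$ or has all of its demands in $D_i$ served by some neighbor in $t.\mathcal{D}$. This is exactly the definition of \textsc{Satisfied}$(i)$, so \textsc{Unsatisfied-DS}$(i) = \lnot\textsc{Satisfied}(i)$ is false for every node. Therefore no node enters $\mathcal{D}$ during round $t$.

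Next I would show that \textsc{Forbidden-DS}$(i)$ is false everywhere. For any $i$ with $st.i = OUT$ this is immediate, since the macro requires $st.i = IN$ as its leading conjunct. For $i \in t.\mathcal{D}$, I would argue that \textsc{Removable-DS}$(i)$ is false: by minimality of $t.\mathcal{D}$, $\mathcal{P}_d^\prime(t.\mathcal{D}\setminus\{i\})$ fails, which means that removing $i$ leaves some node $w$ uncovered, i.e.\ there is a demand $d\in D_w$ which is not in $S_x$ for any $x\in Adj_w\cap (t.\mathcal{D}\setminus\{i\})$, and either $w=i$ or $w\in Adj_i$ with $d\in S_i$. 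In either case one of the two conjuncts of \textsc{Removable-DS}$(i)$ fails (the first if $w=i$, the second if $w\in Adj_i$), so \textsc{Removable-DS}$(i)$ is false. Since \textsc{Forbidden-DS}$(i)$ requires \textsc{Removable-DS}$(i)$ as a conjunct, it is also false. Hence no node leaves $\mathcal{D}$ during round $t$.

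Combining the two parts, no guard is enabled at any node during round $t$, so no $st.i$ changes and therefore $(t+1).\mathcal{D}=t.\mathcal{D}$. The only mildly subtle step is the conversion of ``minimality'' into ``$\lnot$\textsc{Removable-DS}'' — I need to be careful to match the two conjuncts of \textsc{Removable-DS}$(i)$ to the two ways removing $i$ can damage the dominating set (losing domination of $i$ itself versus losing domination of a neighbor $j$ for a demand uniquely served by $i$). Everything else is an unfolding of definitions.
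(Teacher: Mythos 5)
Your proposal is correct and follows essentially the same route as the paper's proof: show \textsc{Unsatisfied-DS} is false everywhere because $\mathcal{D}$ is a dominating set, and show \textsc{Forbidden-DS} is false everywhere because minimality forces $\lnot$\textsc{Removable-DS}$(i)$ for every $i\in\mathcal{D}$. You simply spell out in more detail the step the paper asserts in one line, namely that minimality of $\mathcal{D}$ implies \textsc{Removable-DS}$(i)$ fails via one of its two conjuncts.
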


\begin{proof}
    Since $\mathcal{D}$ is a dominating set, \textsc{Satisfied}($i$) is true for every node in $V(G)$. Hence, the second action is disabled for every node in $V(G)$.
Since $\mathcal{D}$ is minimal, \textsc{Removable}($i$) is false for every node in $\mathcal{D}$. Hence, the first action is disabled at every node $i$ in $\mathcal{D}$. 
Thus, $\mathcal{D}$ remains unchanged.
    \qed
\end{proof}

\subsection{Eventual Lattice Linearity of \Cref{algorithm:rules-ds}}\label{subsection:ds-eventual}

\Cref{lemma:ds} showed that \Cref{algorithm:rules-ds}.1 is lattice linear. 
In this subsection, we make additional observations about \Cref{algorithm:rules-ds} to generalize the notion of lattice linearity to eventually lattice linear algorithms. 
We have the following observations.
\begin{enumerate}
    \item From \Cref{lemma:d-not-ds}, starting from any state, \Cref{algorithm:rules-ds}.2 will reach a feasible state even if a node reads old information about the neighbors. This is due to the fact that \Cref{algorithm:rules-ds}.2 only adds nodes to $\mathcal{D}$.
    \item From \Cref{lemma:ds}, if we start $G$ in a feasible state where no node has incorrect information about the neighbors in the initial state then \Cref{algorithm:rules-ds}.1 reaches a minimal dominating set. Note that this claim remains valid even if the nodes execute actions of \Cref{algorithm:rules-ds}.1 with old information about the neighbors as long as the initial information they use is correct. 
    \item Now, we observe that \Cref{algorithm:rules-ds}.1 and \Cref{algorithm:rules-ds}.2 have very limited interference with each other, and so an arbitrary graph $G$ will reach an optimal state even if nodes are using old information. 
    In this case, observe that any node $i$ can execute the action of the first guard incorrectly at most once. After going $OUT$ incorrectly, when it reads the correct information about other nodes, then it will execute the guard of the second action and change its state to $IN$, after which, it can go out only if it evaluates that \textsc{Forbidden}($i$) is true.
\end{enumerate}

From the above observations, if we allow the nodes to read old values, then the nodes can violate the feasibility of $G$ finitely many times and so $G$ will eventually reach a feasible state and stay there forever. 
We introduce the term Eventually Lattice Linear Self-stabilizing algorithms (ELLSS).
Before defining ELLSS algorithms, we define the class of Lattice Linear Algorithms (LL) as follows.

\begin{definition}\textbf{Lattice Linear Algorithms}.
    LL algorithms are the algorithms under which a system $G$ is forced to traverse a lattice of states and proceed to reach an optimal solution.
\end{definition}

Note that LL is a generalization of the notion of lattice linearity introduced in \cite{Garg2020}. In \cite{Garg2020}, the existence of the lattice arises from the problem at hand. In LL, it arises by constraints imposed by the algorithm.  

The class of ELLSS algorithms can be defined as follows.

\begin{definition}\label{definition:ellss} \textbf{Eventually Lattice Linear Self-Stabilizing Algorithms.}
An algorithm is ELLSS if its rules can be split into $F_1$ and $F_2$ and there exists a predicate $\mathcal{R}$ such that 
    \begin{enumerate}[label=(\alph*)]
        \item Any computation of $F_1 [] F_2$, that is, the union of the actions in $F_1$ and $F_2$,
        eventually reaches a state
        where $\mathcal{R}$ is stable in $F_1 [] F_2$,
        i.e., $\mathcal{R}$ is true and remains true subsequently (even if the nodes read old values of other nodes),
    \item $F_2$ is an LL algorithm, given that it starts in a state in $\mathcal{R}$,
        \item Actions in $F_1$ are disabled once the program reaches $\mathcal{R}$.
    \end{enumerate}
\end{definition}

In \Cref{algorithm:rules-ds}, $F_1$ corresponds to \Cref{algorithm:rules-ds}.2 and $F_2$ corresponds to \Cref{algorithm:rules-ds}.1.
And, the above discussion shows that this algorithm satisfies the properties of \Cref{definition:ellss} and $\mathcal{R}$ corresponds to the predicate that $\mathcal{D}$ is a dominating set, i.e., $\mathcal{R}\equiv\mathcal{P}_d^\prime(\mathcal{D})$. 

Finally, we note that in \Cref{algorithm:rules-ds}, we chose to be aggressive for a node to enter the dominating set but cautious to leave the dominating set.
A similar ELLSS SDDS algorithm is feasible where a node is cautious to enter the dominating set but aggressive to leave it.

\subsection{Analysis  of \Cref{algorithm:rules-ds}: Time and Space complexity}\label{subsection:time-space-complexity-analysis}

\begin{theorem}\label{theorem:ds-convergence-time}
    \Cref{algorithm:rules-ds} converges in $2n$ moves.
\end{theorem}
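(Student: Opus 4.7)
The plan is to show that each node changes state at most twice in any execution, so the total number of moves is at most $2n$. I will prove two per-node bounds: (a) the rule \textsc{Unsatisfied-DS}$(i)\longrightarrow st.i=IN$ fires at most once at each node $i$, and (b) the rule \textsc{Forbidden-DS}$(i)\longrightarrow st.i=OUT$ fires at most once at each node $i$. Since these two rules toggle $st.i$ between $IN$ and $OUT$, a node starting in $IN$ can only contribute an OUT-rule move (at most one move), while a node starting in $OUT$ can contribute at most one IN-rule move followed by at most one OUT-rule move; summing over $n$ nodes gives the bound of $2n$.

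The central claim driving (a) is the following invariant: once $i$ is in state $OUT$ with every demand $d\in D_i$ served by some neighbor in $\mathcal{D}$, $i$ remains in this ``satisfied-OUT'' situation forever, so \textsc{Unsatisfied-DS}$(i)$ stays false. This covers every way in which the IN rule could re-fire at $i$: either $i$ began in $OUT$ and was already dominated, or $i$ has just executed the OUT rule (whose guard contains \textsc{Removable-DS}$(i)$, which is precisely the condition that every demand of $i$ is served by a neighbor in $\mathcal{D}$). Given the invariant, (a) is immediate. Then (b) follows at once: since the OUT rule requires $st.i=IN$, two OUT-rule firings at $i$ would force an intermediate IN-rule firing, which (a) now rules out.

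To establish the invariant I will track, for each $d\in D_i$, the set $\Delta_{i,d}=\{k\in Adj_i\cap\mathcal{D}:d\in S_k\}$ and show that $|\Delta_{i,d}|\geq 1$ is preserved by every execution step. If $|\Delta_{i,d}|\geq 2$ and $k,k'\in\Delta_{i,d}$ are both removable, then taking $j=i$ and $d'=d$ in the second conjunct of \textsc{Forbidden-DS}$(k)$ forces $id.k'<id.k$, and symmetrically for $k'$, so the two cannot simultaneously satisfy \textsc{Forbidden-DS} in the same snapshot---the same mutual-exclusion observation used in the proof of \Cref{lemma:ds-lattice-linear}---and at most the higher-ID member can leave $\mathcal{D}$ in a given round. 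If $|\Delta_{i,d}|=1$, then its sole member $k^{*}$ fails \textsc{Removable-DS} because the clause requiring another server of $d$ in $Adj_i$ inside $\mathcal{D}$ has no witness, so the OUT rule cannot fire at $k^{*}$. The main obstacle I foresee is carefully handling distributed, potentially concurrent departures of several dominators of a common demand; I would resolve this as in \Cref{lemma:ds-lattice-linear} by considering the maximum-ID removable candidate in $\Delta_{i,d}$ and observing that \textsc{Forbidden-DS} must be false for every other member of that set. Once the invariant is in place, the per-node bounds and the total bound of $2n$ follow immediately from the alternation-counting sketched in the first paragraph.
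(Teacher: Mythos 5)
Your argument is correct, but it reaches the $2n$ bound by a genuinely different accounting than the paper's. The paper's proof is phase-based: it first argues that every node for which \textsc{Unsatisfied-DS} holds executes the second action within the first round, so $\mathcal{D}$ becomes a dominating set after at most $n$ moves; it then invokes \Cref{lemma:ds-lattice-linear} to argue that from a feasible state only removals occur, each move strictly shrinks $\mathcal{D}$ while keeping it a dominating set, giving at most $n$ further moves. Your proof is a per-node amortized count: each node fires the $IN$ rule at most once and the $OUT$ rule at most once, justified by the explicit invariant that a node which is $OUT$ and has all demands served by members of $\mathcal{D}$ stays that way forever (because the last server of any demand fails \textsc{Removable-DS}, and two removable servers of a common demand cannot both satisfy \textsc{Forbidden-DS} in the same snapshot due to the ID comparison). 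The two proofs rest on the same structural fact --- that the $OUT$ rule never destroys domination --- but your version makes explicit something the paper leaves implicit, namely why no node can oscillate ($IN$, $OUT$, $IN$, \dots), and it yields a move bound that does not depend on the round structure or on all additions completing before any removal. The paper's phasing, on the other hand, directly delivers \Cref{corollary:converge-1-round-and-n-moves} (feasibility within one round), which your per-node count does not give by itself. The one point to state carefully in a full write-up is the simultaneous-departure case under a distributed or synchronous daemon, which you correctly flag and resolve exactly as in the proof of \Cref{lemma:ds-lattice-linear}.
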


\begin{proof}
    In the beginning of the algorithm, let $G$ be the input graph. Assume for contradiction that $\mathcal{D}$ is not a dominating set as per the input graph $G$. Let $V^\prime$ be the set of nodes, $V^\prime\subseteq V(G)\setminus \mathcal{D}$ such that each node $i$ in $V^\prime$, $i$ is not satisfied. It means that the guard of the second action holds true for $i$. Therefore, we have that $i$ will execute the second action during the first round (or within first $n$ moves) and $G$ will enter a feasible state.
    
    The above observation implies that by the end of the first round, $\mathcal{D}$ becomes a dominating set. $\mathcal{D}$ may or may not be a minimal dominating set. So in each subsequent time step, $\mathcal{D}$ will reduce in size by at least one node (by \Cref{lemma:ds}) but still be a valid dominating set. So we have that \Cref{algorithm:rules-ds} will converge within $n$ moves after when $G$ enters a feasible state.
    
\end{proof}

\begin{corollary}\label{corollary:converge-1-round-and-n-moves}
    $\mathcal{D}$ will be feasible within 1 round. After entering a feasible state, $\mathcal{D}$ will be optimal within $n$ moves.
\end{corollary}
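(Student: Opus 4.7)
The plan is to split the corollary into its two assertions and discharge each one by invoking a previously established lemma, essentially repackaging the two phases already implicit in the proof of Theorem \ref{theorem:ds-convergence-time}.

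For the first assertion (feasibility within one round), I would appeal directly to Lemma \ref{lemma:d-not-ds}. If $\mathcal{D}$ already happens to be a dominating set at time $0$, the claim is immediate since $\mathcal{P}_d^\prime(\mathcal{D})$ holds trivially. Otherwise Lemma \ref{lemma:d-not-ds} tells us that $\mathcal{D}$ becomes a dominating set by the end of the very first round, because every node that is still unsatisfied when it evaluates its guards in that round has \textsc{Unsatisfied-DS}$(i)$ true and so executes \Cref{algorithm:rules-ds-lattice-linear}.2, joining $\mathcal{D}$. Either way, $\mathcal{P}_d^\prime(\mathcal{D})$ holds after at most one round.

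For the second assertion, I would argue that from a feasible state onward, only \Cref{algorithm:rules-ds-lattice-linear}.1 can be enabled: once $\mathcal{P}_d^\prime(\mathcal{D})$ holds, \textsc{Satisfied}$(i)$ is true at every node, so \textsc{Unsatisfied-DS}$(i)$ is false everywhere and \Cref{algorithm:rules-ds-lattice-linear}.2 cannot fire. Each activation of \Cref{algorithm:rules-ds-lattice-linear}.1 flips exactly one node from $IN$ to $OUT$, so constitutes a single move that strictly shrinks $\mathcal{D}$ by one. The unnamed lemma immediately following \Cref{lemma:ds-lattice-linear} guarantees that such a removal preserves the dominating-set property, so feasibility is maintained throughout and no node is ever re-added. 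Since $|\mathcal{D}| \leq n$ at the start of the feasible phase, at most $n$ such moves can occur before $\mathcal{D}$ becomes a minimal dominating set, at which point Lemma \ref{lemma:ds-minimal} kicks in to ensure termination.

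The main subtlety to watch for is ruling out oscillation: that no removal performed during the feasible phase could leave some node unsatisfied and thereby re-enable \Cref{algorithm:rules-ds-lattice-linear}.2, which would break the simple ``at most $|\mathcal{D}|$ removals'' count. This is precisely what the \textsc{Removable-DS} clause inside \textsc{Forbidden-DS} ensures, and it is the content of the lemma after \Cref{lemma:ds-lattice-linear}. Once that invariant is in hand, the $n$-move bound is immediate from $|\mathcal{D}| \leq n$, and the corollary follows.
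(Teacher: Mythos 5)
Your proposal is correct and follows essentially the same route as the paper: the corollary is presented there as an immediate consequence of the two-phase argument in the proof of \Cref{theorem:ds-convergence-time}, which uses \Cref{lemma:d-not-ds} for feasibility in one round and \Cref{lemma:ds-lattice-linear} (with the subsequent lemma on preservation of the dominating-set property) for the $n$-move bound in the second phase. Your explicit handling of the non-oscillation point is a welcome clarification but not a departure from the paper's reasoning.
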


\begin{corollary}\label{corollary:algo-stabilizing-silent}
    \Cref{algorithm:rules-ds} is self-stabilizing and silent.
\end{corollary}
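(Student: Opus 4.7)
The plan is to obtain both conclusions as essentially immediate consequences of the two results that precede the corollary, namely \Cref{theorem:ds-convergence-time} (convergence in $2n$ moves from an arbitrary initial state) and \Cref{lemma:ds-minimal} (closure: once $\mathcal{D}$ is a minimal dominating set, it stays that way in the next round). No new machinery is needed; the work is just to combine the reachability claim with the closure claim and then read off silence from the guards.

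For the self-stabilization half, I would begin by recalling that a self-stabilizing algorithm must (i) reach an invariant state from any starting configuration, and (ii) remain in invariant states thereafter. Taking the invariant to be $\mathcal{P}_d(\mathcal{D})$ (i.e., $\mathcal{D}$ is a minimal dominating set), part (i) is precisely what \Cref{theorem:ds-convergence-time} establishes, and part (ii) follows from \Cref{lemma:ds-minimal}, which shows $(t+1).\mathcal{D}=t.\mathcal{D}$ whenever $t.\mathcal{D}$ is minimal. Since the state is unchanged from one round to the next, induction on the round index gives closure forever.

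For the silence half, I would re-invoke the argument already used inside the proof of \Cref{lemma:ds-minimal}. When $\mathcal{D}$ is a minimal dominating set, \textsc{Satisfied}($i$) holds at every node (because $\mathcal{D}$ dominates), so \textsc{Unsatisfied-DS}($i$) is false everywhere and the second rule of \Cref{algorithm:rules-ds-lattice-linear} is disabled. Simultaneously, \textsc{Removable-DS}($i$) fails for every $i \in \mathcal{D}$ by minimality, which forces \textsc{Forbidden-DS}($i$) to be false, disabling the first rule. For $i \notin \mathcal{D}$, the first rule is also disabled since its guard requires $st.i = IN$. Hence no guard is enabled anywhere in $V(G)$, which is exactly the definition of silence given in \Cref{section:preliminaries}.

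The main (and only) subtlety I anticipate is making explicit why the argument remains valid even if nodes read old values of their neighbors; this is not really an obstacle because the eventual lattice linearity discussion in \Cref{subsection:ds-eventual-lattice-linear} together with \Cref{theorem:ds-convergence-time} already covers this, but I would add a one-sentence remark tying the two together so that the corollary is self-contained.
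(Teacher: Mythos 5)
Your proposal is correct and follows exactly the route the paper intends: the corollary is left unproved there precisely because it is the immediate combination of \Cref{theorem:ds-convergence-time} (convergence to a minimal dominating set) with \Cref{lemma:ds-minimal} (closure, whose proof already shows every guard is disabled in that state, giving silence). Your added remark about old values is a reasonable bonus but not required for the corollary as stated.
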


\begin{lemma}\label{lemma:time-complexity-1-time-step}
    At any time-step, a node will take $O((\Delta)^4\times (max_d)^2)$ time, where
    \begin{enumerate}
        \item $\Delta$ is the maximum degree of any node in $V(G)$,
        \item $max_d$ is the total number of distinct demands made by all the nodes in $V(G)$.
    \end{enumerate}
\end{lemma}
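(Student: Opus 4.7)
The plan is to do a straightforward complexity accounting over the macros that appear in the two rules of \Cref{algorithm:rules-ds-lattice-linear}, working from the innermost predicates outward. A single time-step for node $i$ consists of evaluating \textsc{Unsatisfied-DS}$(i)$ and \textsc{Forbidden-DS}$(i)$, so the bound reduces to bounding the cost of these two predicates, under the convention that checking membership in an $S_j$ or a $D_j$ set is $O(1)$ (e.g., via a bit-vector indexed by the universe of demands, which has size at most $max_d$).

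First I would dispatch the cheap pieces. \textsc{Satisfied}$(i)$ (and hence \textsc{Unsatisfied-DS}$(i)$) runs a loop over $D_i$ (size $\leq max_d$) nested inside a loop over $Adj_i$ (size $\leq \Delta$), so its cost is $O(\Delta\cdot max_d)$. \textsc{Dominators-Of}$(j)$ iterates over $Adj_j$ and, for each neighbor, checks whether any $d\in D_j$ lies in its service set, giving $O(\Delta\cdot max_d)$. Next I would bound \textsc{Removable-DS}$(k)$: the first conjunct is just the \textsc{Satisfied}-style loop at cost $O(\Delta\cdot max_d)$; the second conjunct has three nested loops (over $j'\in Adj_k$, $d'\in D_{j'}$, $k'\in Adj_{j'}$) and thus costs $O(\Delta^2\cdot max_d)$. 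So \textsc{Removable-DS} is $O(\Delta^2\cdot max_d)$ overall.

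Finally I would assemble \textsc{Forbidden-DS}$(i)$. Its outer structure iterates over $j\in Adj_i$ and $d\in D_j$, which is $O(\Delta\cdot max_d)$ work just for the iteration. For each such $(j,d)$, one computes \textsc{Dominators-Of}$(j)$ once (a lower-order term) and then iterates over its $O(\Delta)$ elements, evaluating \textsc{Removable-DS}$(k)$ per element. That inner body therefore costs $O(\Delta)\cdot O(\Delta^2\cdot max_d)=O(\Delta^3\cdot max_d)$. Multiplying by the outer $O(\Delta\cdot max_d)$ yields $O(\Delta^4\cdot max_d^2)$, which dominates the cost of \textsc{Unsatisfied-DS}$(i)$ and gives the claimed bound.

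The one subtlety worth flagging, and the only place that could inflate the bound, is the recursive-looking invocation of \textsc{Removable-DS}$(k)$ from inside \textsc{Forbidden-DS}$(i)$: naively one might fear another factor of $\Delta$ or $max_d$ from the nesting. The key point to emphasize in the write-up is that \textsc{Removable-DS} itself does not call \textsc{Removable-DS} on further nodes, so the nesting depth is fixed; once that is observed, the multiplication of the per-level bounds is the only routine step. A secondary remark to include is that the $O(1)$ set-membership assumption is essential; if one uses $O(\log max_d)$ or $O(max_d)$ membership the exponent on $max_d$ would change, and this should be stated explicitly so the bound is unambiguous.
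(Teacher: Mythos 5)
Your proposal is correct and follows essentially the same route as the paper: both assume $O(1)$ set-membership via an array/bit-vector indexed by the demand universe, bound \textsc{Removable-DS} by $O(\Delta^2\cdot max_d)$ from its nested quantifiers, and then multiply the outer $\forall j\in Adj_i$, $\forall d\in D_j$, $\forall k\in\textsc{Dominators-Of}(j)$ iteration by the per-call cost of \textsc{Removable-DS} to get $O(\Delta^4\cdot (max_d)^2)$, with the second guard contributing only a lower-order $O(\Delta\cdot max_d)$ term. Your write-up is in fact somewhat more explicit than the paper's about where each factor originates, but the decomposition and the final accounting are the same.
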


\begin{proof}
    There are 3 expressions in the guard which are separated by an ``and'' ($\land$) operator.
    
    \textbf{First Expression}: ``$st.i=IN$'' takes constant amount of time.
    
    \textbf{Second expression}: Second expression is a \textsc{Removable} macro, which is a conjunction of two expressions. The first expression contains a universal quantifier ($\forall d\in D_i$) nested with an existential quantifier ($\exists j\in Adj_i$). The second expression contains two universal quantifiers ($\forall j\in Adj_i$ and $\forall d\in D_j$) and an existential quantifier ($\exists k\in Adj_j$) nested one after the other. Therefore, the time complexity of this expression can be evaluated to be of the time complexity of $\Delta^2\times max_d$.
    
    \textbf{Third expression}: The second expression, there are 4 objects as follows.
    \begin{enumerate}
        \item A universal quantifier - $\forall j\in Adj_i$.
        \item A universal quantifier - $\forall d\in D_j$.
        \item A universal quantifier - $\forall k\in$ \textsc{Domintors-Of}($j$).
        \item The \textsc{Removable} macro.
    \end{enumerate}
    Therefore, the time complexity to compute the expression evaluates to be in the order of the product of $(|D_i|)^2$, and $(|Adj_i|)^4$, which is upper bounded by $O(\Delta^4\times (max_d)^2)$.
    
    The total time complexity to evaluate the first guard computes to be $O((\Delta)^4\times (max_d)^2)$.
    
    Assuming that the services (demands) are stored in any node in an array of length equal to the total number of services (demands), and the assuming that the index of the service is equal to the encoding of the service, containing 1 at the corresponding index iff the corresponding service (demand) is being provided (demanded) by that node, we can compute the presence of a service (demand) in that node in $O(1)$ time. Therefore, we can compute $d\in S_k$ or $d\in S_j$ in a constant amount of time.
    
    The second guard has the time complexity of $O(\Delta\times max_d)$. Therefore, it does not affect the order of the resultant time-complexity.
\end{proof}

\begin{corollary}\label{corollary:storage-each-node}
    The space required to store the services and demands in each node is $O(max_d)$, where $max_d$ is the total number of distinct demands made by all the nodes in $V(G)$.
\end{corollary}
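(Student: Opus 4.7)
The plan is to invoke the explicit representation of $S_i$ and $D_i$ already used in the proof of \Cref{lemma:time-complexity-1-time-step}. There, each node stores both its service set and its demand set as bit-arrays indexed by the encoding of each service (respectively demand), with a $1$ at index $x$ iff the node provides (respectively demands) $x$. Under this representation the space used by one such array is proportional to the size of the universe of possible indices.

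Next I would argue that the effective universe has size $O(max_d)$. For the demand arrays this is immediate, since $max_d$ is by definition the number of distinct demands made by nodes in $V(G)$, so $|D_i| \leq max_d$ and the demand bit-array has length at most $max_d$. For the service arrays, I would observe that only services that match some demand in $\bigcup_{j \in V(G)} D_j$ are ever relevant to the guards of \Cref{algorithm:rules-ds-lattice-linear} (every occurrence of a service in the macros \textsc{Satisfied}, \textsc{Removable-DS}, \textsc{Dominators-Of}, and \textsc{Forbidden-DS} appears inside a test of the form $d \in S_k$ for some $d \in D_j$). Hence we may safely project $S_i$ onto the universe of distinct demands, giving an array of length at most $max_d$ without changing the behavior of the algorithm.

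Combining the two bounds, each node stores two bit-arrays of length $O(max_d)$, giving total space $O(max_d)$ per node for services and demands, which is the claimed bound. The main (and essentially the only) subtlety is the service side: one has to justify that services irrelevant to any demand need not be stored, so that the service universe is bounded by $max_d$ rather than by the potentially larger quantity $\max_i |S_i|$ or by the global number of distinct services. Once that observation is in place, the corollary follows directly from the storage convention already stipulated in \Cref{lemma:time-complexity-1-time-step}.
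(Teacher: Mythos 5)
Your proof is correct and follows essentially the same route as the paper, which offers no separate argument for this corollary and simply relies on the bit-array storage convention stipulated in the proof of \Cref{lemma:time-complexity-1-time-step}. Your additional observation --- that the service arrays can be projected onto the universe of distinct demands because every test in the guards has the form $d \in S_k$ for some demanded $d$, so the service side is also bounded by $max_d$ rather than by the total number of distinct services --- fills in a detail the paper leaves implicit (its stated convention makes the service array as long as ``the total number of services,'' which is not a priori $O(max_d)$), and is a worthwhile clarification rather than a departure.
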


\section{Other Examples}\label{section:other-examples}

The sequence of states of $G$ under \Cref{algorithm:rules-ds} is essentially divided into two phases: (1) the system entering a feasible state (reduction of $RANK$ to zero), and then (2) the system entering an optimal state (reduction of $BADNESS$ to zero). \Cref{algorithm:rules-ds} first takes the system to a feasible state where $RANK=0$ and then it takes the system to an optimal state where $RANK=0\land BADNESS=0$. 

This notion was used to define the concept of ELLSS algorithms. The notion of ELLSS algorithms can be extended to numerous other problems 
where the optimal global state can be defined in terms of a minimal (or maximal) set $\mathcal{S}$ of nodes.
This includes the vertex cover problem, independent set problem and their variants.
Once the propositions regarding $\mathcal{S}$ is defined where its structure depends on some relation of nodes with their neighbours, the designed algorithm can decide which node to put $IN$ the set and which nodes to take $OUT$.

In this section, we describe algorithms for vertex cover and independent set, along with graph coloring, which follow from the structure that we have laid for the SDDS problem. Thus, the algorithms  we describe for these problems are also ELLSS algorithms. The proofs of correctness follow from the proofs of correctness described above for the SDDS problem.

\subsection{Vertex cover}

In the \textit{vertex cover} (VC) problem, the input is an arbitrary graph $G$, and the task is to compute a minimal set $\mathcal{V}$ such that for any edge $\{i,j\}\in E(G)$, $(i\in \mathcal{V})\lor (j\in \mathcal{V})$. If a node $i$ is in $\mathcal{V}$, then $st.i=IN$, otherwise $st.i=OUT$. To develop an algorithm for VC, we utilize the macros in the following table.

\begin{center}
    \begin{tabular}{|l|l|}
        \hline
        \textsc{Removable-VC}$(i$) & $(\forall j \in Adj_i, st.j=IN)$.\\
        \textsc{Unsatisfied-VC}$(i)$ & $(st.i=OUT)\land(\exists j\in Adj_i:st.j=OUT)$.\\
        \textsc{Forbidden-VC}$(i)$ & $(st.i=IN)$ $\land$ $($\textsc{Removable-VC}$(i))\land$\\
         & \quad $(\forall j\in Adj_i: (id.j<id.i)\lor \lnot$\textsc{Removable-VC}$(j))$. \\
        \hline
    \end{tabular}
\end{center}

The proposition $\mathcal{P}^\prime_v$ defining a feasible state and the proposition $\mathcal{P}_v$ defining the optimal state can be defined as follows.
\begin{center}
    $\mathcal{P}_v^\prime(\mathcal{V})\equiv \forall i\in V(G):((i\in \mathcal{V})\lor (\forall j\in Adj_i, j\in \mathcal{V}))$.\\
    $\mathcal{P}_v(\mathcal{V})\equiv \mathcal{P}_v^\prime(\mathcal{V}) \land (\forall i\in \mathcal{V}, \lnot\mathcal{P}_v^\prime(\mathcal{V}- \{i\})).$
\end{center}
Based on the definitions above, the algorithm for VC is described as follows.
\begin{algorithm}\label{algorithm:rules-vc}Rules for node $i$.
    \begin{center}
        \begin{tabular}{|l|}
            \hline
            \textsc{Forbidden-VC}$(i)\longrightarrow st.i=OUT$.\\
            \textsc{Unsatisfied-VC}$(i)\longrightarrow st.i=IN$.\\
            \hline
        \end{tabular}
    \end{center}
\end{algorithm}

Once again, this is an ELLSS algorithm in that it satisfies the conditions in \Cref{definition:ellss}, where $F_1$ corresponds to the second action of \Cref{algorithm:rules-vc}, $F_2$ corresponds to its first action, and $\mathcal{R}\equiv \mathcal{P}_v^\prime(\mathcal{V})$.
Thus, starting from any arbitrary state, the algorithm eventually reaches a state where $\mathcal{V}$ is a minimal vertex cover.

Note that in \Cref{algorithm:rules-vc}, the definition of \textsc{Removable} relies only on the information about distance-1 neighbors. Hence, the evaluation of guards take $O(\Delta^3)$ time. In contrast, (the standard) dominating set problem requires information of distance-2 neighbors to evaluate \textsc{Removable}. Hence, the evaluation of guards in that would take $O(\Delta^4)$ time. 

\subsection{Independent set}

In VC and SDDS problems, we tried to reach a minimal set. Here on the other hand, we have to obtain a maximal set. In the \textit{independent set} (IS) problem, the input is an arbitrary graph $G$, and the task is to compute a maximal set $\mathcal{I}$ such that for any two nodes $i\in\mathcal{I}$ and $j\in\mathcal{I}$, if $i\neq j$, then $\{i,j\}\neq E(G)$.

The proposition $\mathcal{P}^\prime_i$ defining a feasible state and the proposition $\mathcal{P}_i$ defining the optimal state can be defined as follows.
\begin{center}
    $\mathcal{P}_i^\prime(\mathcal{I})\equiv \forall i\in V(G):((i\not\in\mathcal{I})\lor (\forall j\in Adj_i: j\not\in \mathcal{I}))$.\\
    $\mathcal{P}_i(\mathcal{I})\equiv \mathcal{P}_i^\prime(\mathcal{I})\land(\forall i\in V(G)\setminus\mathcal{I}, \lnot\mathcal{P}_i^\prime(\mathcal{I}\cup\{i\}))$.
\end{center}

If a node $i$ is in $\mathcal{I}$, then $st.i=IN$, otherwise $st.i=OUT$. To develop the algorithm for independent set, we define the macros in the following table.

\begin{center}
    \begin{tabular}{|l|l|}
        \hline
        \textsc{Addable}($i$) & 
         $(\forall j \in Adj_i, st.j=OUT)$.\\
        \textsc{Unsatisfied-IS}$(i)$ & $(st.i=IN)\land(\exists j\in Adj_i:st.j=IN$).\\
        \textsc{Forbidden-IS}$(i)$ & $st.i=OUT\land$ \textsc{Addable}$(i)\land$\\
         & \quad $(\forall j\in Adj_i:((id.j<id.i)\lor(\lnot$\textsc{Addable}$(j)))$.\\
        \hline
    \end{tabular}
\end{center}

Based on the definitions above, the algorithm for IS is described as follows.
\begin{algorithm}\label{algorithm:rules-is}Rules for node $i$.
    \begin{center}
        \begin{tabular}{|l|}
            \hline
            \textsc{Forbidden-IS}$(i)\longrightarrow st.i=IN$.\\
            \textsc{Unsatisfied-IS}$(i)\longrightarrow st.i=OUT$.\\
            \hline
        \end{tabular}
    \end{center}
\end{algorithm}

This algorithm is an ELLSS algorithm as well: as per \Cref{definition:ellss}, $F_1$ corresponds to the second action of \Cref{algorithm:rules-vc}, $F_2$ corresponds to its first action, and $\mathcal{R}\equiv \mathcal{P}_i^\prime(\mathcal{I})$.
Thus, starting from any arbitrary state, the algorithm eventually reaches a state where $\mathcal{I}$ is a maximal independent set.

In \Cref{algorithm:rules-is}, the definition of \textsc{Addable} relies only on the information about distance-1 neighbors. Hence, the evaluation of guards take $O(\Delta^3)$ time.

\subsection{Coloring}

In this section, we extend ELLSS algorithms to graph coloring. 
In the \textit{graph coloring} (GC) problem, the input is a graph $G$ and the task is to assign colors to all the nodes of $G$ such that no two adjacent nodes have the same color.

Unlike vertex cover, dominating set or independent set, coloring does not have a binary domain. Instead, we correspond the equivalence of changing the state to $IN$ to the case where a node increases its color. And, the equivalence of changing the state to $OUT$ corresponds to the case where a node decreases its color. With this intuition, we define the macros as shown in the following table. 

\begin{center}
    \begin{tabular}{|l|l|}
        \hline
        \textsc{Conflicted}($i$) & $(\exists j \in Adj_i:(color.j=color.i))$.\\
        \textsc{Subtractable}($i$) & $\exists c\in [1:color.i-1]: ((\forall j\in Adj_i: color.j\neq c))$.\\
        \textsc{Unsatisfied-GC}$(i)$ & \textsc{Conflicted}$(i)$.\\
        \textsc{Forbidden-GC}$(i)$ & $\lnot$\textsc{Conflicted}($i$) $\land$ \textsc{Subtractable}$(i)\land$\\
         & $(\forall j\in Adj_i:(id.j<id.i\lor \lnot$\textsc{Subtractable}$(j)))$.\\
        \hline
    \end{tabular} 
\end{center}

The proposition $\mathcal{P}^\prime_c$ defining a feasible state and the proposition $\mathcal{P}_c$ defining an optimal state is defined below. $\mathcal{P}_c$ is true when all the nodes have lowest available color, that is, for any node $i$ and for all colors $c$ in $[1:deg(i)+1]$, either $c$ should be greater than $color.i$ or $c$ should be equal to the color of one of the neighbors $j$ of $i$.

\begin{center}
    $\mathcal{P}_c^\prime\equiv \forall i\in V(G),\forall j\in Adj_i:color.i\neq color.j$.\\
    $\mathcal{P}_c\equiv \mathcal{P}_c^\prime\land (\forall i\in V(G):(\forall c\in [1:color.i-1]:$\\
    $(c<color.i\implies (\exists j\in Adj_i: color.j=c))))$.
\end{center}

Unlike SDDS, VC and IS, in graph coloring (GC), each node is associated with a variable $color$ that can take several possible values (the domain can be as large as the set of natural numbers). 
As mentioned above, the action of increasing the color is done whenever a conflict is detected. However, decreasing the color is achieved only with coordination with others. Thus, the actions of the algorithm are shown in \Cref{algorithm:rules-c}. 

\begin{algorithm}\label{algorithm:rules-c}Rules for node $i$.
    \begin{center}
        \begin{tabular}{|l|}
            \hline
            \textsc{Forbidden-GC}$(i)\longrightarrow color.i=\min\limits_{c}\{c\in [1:color.i-1]:(\forall j\in Adj_i: color.j\neq c)\}$.\\
            \textsc{Unsatisfied-GC}$(i)$ $\longrightarrow color.i=color.i+id.i$.\\
            \hline
        \end{tabular}
    \end{center}
\end{algorithm}

\Cref{algorithm:rules-c} is an ELLSS algorithm: according to \Cref{definition:ellss}, $F_1$ corresponds to the second action of \Cref{algorithm:rules-vc}, $F_2$ corresponds to its first action, and $\mathcal{R}\equiv \mathcal{P}_c^\prime$.
Thus, starting from any arbitrary state, the algorithm eventually reaches a state where no two adjacent nodes have the same color and no node can reduce its color.

\section{Conclusion}\label{section:conclusion}

We extended lattice linear algorithms from \cite{Garg2020} to the context of self-stabilizing algorithms. The approach in \cite{Garg2020} relies on the assumption that the algorithm starts in  specific initial states and, hence, it is not directly applicable in self-stabilizing algorithms. A key benefit of lattice linear algorithms is that correctness is preserved even if nodes are reading old information about other nodes. Hence, they allow a higher level of concurrency. 

We began with the service-demand based dominating set (SDDS) problem and designed a self-stabilizing algorithm for the same. Subsequently, we observed that it consists of two parts: One part is a lattice linear algorithm that constructs a minimal dominating set if it starts in some valid initial states, say $\mathcal{R}$. The second part makes sure that it gets the program in a state where $\mathcal{R}$ becomes true and stays true forever. We also showed that these parts can only have bounded interference thereby guaranteeing that the overall program is self-stabilizing even if the nodes read old values of other nodes. 

We introduced the notion of eventually lattice-linear self-stabilization to capture such algorithms. We demonstrated that it is possible to develop eventually lattice linear self-stabilizing (ELLSS) algorithms for vertex cover, independent set and graph coloring. 

We note that Algorithms \ref{algorithm:rules-ds}-\ref{algorithm:rules-c} could also be designed to be lattice linear self-stabilizing algorithms (LLSS) if we change the second action of these algorithms to account for the neighbors in the same fashion as done for the second action. 

The algorithms \ref{algorithm:rules-ds}-\ref{algorithm:rules-c} converge under central, distributed, or synchronous daemon. Due to the property of ELLSS, its straightforward implementation in read/write model is also self-stabilizing.
Intuitively, in a straightforward translation, each remote variable is replaced by a local copy of that variable and this copy is updated asynchronously. Normally, such straightforward translation into read/write atomicity does not preserve self-stabilization. However, the ELLSS property of the self-stabilization ensures correctness of the straightforward translation.

As future work, an interesting direction can be to study which class of problems can the paradigm of LL and ELLSS algorithms be extended to. Also, it is interesting to study if we can implement approximation algorithms under these paradigms.

\bibliography{dk2d1.bib}
\bibliographystyle{acm}

\end{document}